\newtheorem{thm}[theorem]{Theorem}
\newtheorem{prop}[theorem]{Proposition}
\newtheorem{comment}[theorem]{Comment}
\newtheorem{conject}[theorem]{Conjecture}
\newcommand{\NN}{{\mathbb{N}}}
\newcommand{\QQ}{{\mathbb{Q}}}
\newcommand{\D}{\mathcal D}
\newcommand{\FS}{\mathcal S}
\newcommand{\LL}{\mathcal L}
\newcommand{\bi}{\begin{itemize}}
\newcommand{\ei}{\end{itemize}}
\newcommand{\bc}{\begin{center}}
\newcommand{\ec}{\end{center}}
\newcommand{\abs}[1]{\left\lvert#1\right\rvert}
\newcommand{\ket}[1]{| #1 \rangle}
\newcommand{\braket}[2]{\langle #1 | #2 \rangle}
\newcommand{\braopket}[3]{\langle #1 | #2 | #3 \rangle}
\begin{document}

\titlerunning{Spectral Representation of Some C.E.\ Sets With an Application} 
\title{Spectral Representation of Some Computably Enumerable Sets With an Application to Quantum 
Provability\thanks{Partially supported by JSPS KAKENHI Grant Number 23650001.}}

\author{Cristian S. Calude\inst{1}%
\thanks{Work done in part during a visit to
Research and Development Initiative,
Chuo University, Tokyo, Japan, January 2013; partially supported also by Marie Curie FP7-PEOPLE-2010-IRSES Grant RANPHYS.}
\and Kohtaro Tadaki\inst{2}\thanks{Corresponding author.}}

\institute{Department of Computer Science, University of Auckland, Auckland,
New Zealand 
\email{cristian@cs.auckland.ac.nz}
\and
Research and Development Initiative, Chuo University,
Tokyo, Japan\\
\email{tadaki@kc.chuo-u.ac.jp}
}

\maketitle

\vspace*{-2mm}

\begin{abstract}
We propose a new type of quantum computer which is used to prove a spectral representation for a class $\FS$ of computable sets.
When  $S\in \FS$ codes the theorems of a formal system,  the quantum computer produces through measurement all theorems and  proofs of the formal system.
We conjecture that  the spectral representation is valid  for all computably enumerable sets. The conjecture implies that the theorems of a  general formal system, like Peano Arithmetic or ZFC,  can be produced through measurement; however, it  is unlikely that the quantum computer  can produce  the proofs as well, as in the particular case of $\FS$.
The analysis suggests that showing the provability of a statement is different from writing up the proof of the statement.
\end{abstract}

\section{Introduction}

\vspace*{-1mm}

Mathematical results are accepted only if they have been proved:  {\em the proof concludes with the proven statement,  the  theorem}. The proof comes first and justifies the theorem. Classically,  there is no alternative scenario.

The genius mathematician Srinivasa Ramanujan
discovered nearly 3900 results~\cite{Berndt}, many without proofs; nearly all his claims have been proven correct. Ramanujan first {\em recognised} a true statement and only later that statement was {\em proven}, hence accepted as a
{\em theorem}. While we don't know how Ramanujan's mind was able to ``discover'' mathematical true facts, we can ask whether there is a way to understand, and possibly imitate, his approach. 

In this paper a new type of quantum computer is used to prove a spectral representation for a class $\FS$ of computable sets is proved.
For every  $S\in \FS$  we 
construct a quantum system
in such a way that the elements of $S$ are exactly the eigenvalues of the Hermitian operator representing
an observable of the quantum system, i.e.\ the spectrum of the operator.
In particular, $S$ can be represented by the energy of the associated quantum system.
The operator associated   to $S\in\FS$ has a special numerical form which guarantees that
by measurement we get both the element and the proof that the element is in $S$.
 {\em We conjecture that  the spectral representation is valid  for all computably enumerable sets.}

When  $S\in \FS$ codes the theorems of a formal system, then the associated quantum computer produces through measurement the theorems  of the formal system and their proofs.
  The conjecture implies that every theorem of a  general (recursively axiomatisable)  formal system, like Peano Arithmetic or ZFC,  can be produced through measurement. However,
we argue that  in this general case the quantum procedure produces, like Ramanujan, only the true the statement, but not its proof.
  Of course,
the proof can be algorithmically generated by a classical algorithm,
albeit in a possibly very long time (such a computation makes sense only for statements recognised as ``interesting'').
For example, if the Riemann hypothesis is produced by the quantum procedure we will know that the famous hypothesis is true.
However, to have a formal proof---whose existence is guaranteed by the correctness of the quantum procedure---we may need to run a very long classical procedure.
The proof obtained in this way could be rather unsatisfactory, as it may not convey the ``understanding'',
the reason for which the Riemann hypothesis holds true (see also \cite{CM}).
Although such a proof  may not make us ``wiser''~\cite{Manin}, it may stimulate the search for better arguments.

The paper is structured as follows. In Section~\ref{QM} we present the basic quantum mechanical facts necessary for describing our quantum systems.
In Section~\ref{CS} we describe a class of computable sets for which we can prove in Section~\ref{RT} the representability theorem and
its application to quantum provability (in Section~\ref{QP}).
In Section~\ref{Cnj}  we discuss the generalisation of the quantum procedure to all computably enumerable sets and
in Section~\ref{QPP}  its application to quantum provability for arbitrary formal systems. 

\vspace*{-2.0mm}

\section{Quantum mechanical facts}
\label{QM}

\vspace*{-1.5mm}

We start with some basic facts on quantum mechanics needed for this paper.
The quantum mechanical arguments
are presented at the level of mathematical rigour adopted in quantum mechanics textbooks written by physicists,
for example, Dirac~\cite{Dirac58} and Mahan~\cite{Mahan10}.

A state of a quantum system is represented by
a vector
in a Hilbert space $\mathcal{H}$.
The vector and the space are called \emph{state vector} and \emph{state space}, respectively.
The \emph{dynamical variables} of  a system are quantities such as
the coordinates and the components of momentum and angular momentum of particles, and the energy of the system.
They play
a crucial
role not only in classical mechanics but also in quantum mechanics.
 Dynamical variables in quantum mechanics are represented by Hermitian operators on the state space $\mathcal{H}$.
A dynamical variable of the system is called an \emph{observable} if
all eigenvectors of the Hermitian operator representing it form
a complete system for $\mathcal{H}$.
Normally we assume that a measurement of any observable can be performed upon a quantum system in any state
(if we ignore the constructive matter, which is one of the points of this paper).

The set of possible outcomes of a measurement of an observable $\mathcal{O}$ of a system is
the eigenvalue spectrum of the Hermitian operator representing $\mathcal{O}$.
Let $\{\ket{m,\lambda}\}$ be
a complete orthonormal system of eigenvectors of the Hermitian operator $A$ representing
an observable $\mathcal{O}$ such that
$A\ket{m,\lambda}=m\ket{m,\lambda}$
for all eigenvalues $m$ of $A$ and all $\lambda$,
where the parameter $\lambda$ designates the degeneracy of the eigenspace of $A$.
Suppose that a measurement of $\mathcal{O}$ is performed
upon a quantum system in the state represented by a normalized vector $\ket{\Psi}\in\mathcal{H}$.
Then the probability of getting the outcome $m$ is given by $p(m)=\sum_{\lambda}\abs{\braket{m,\lambda}{\Psi}}^2$,
where $\braket{m,\lambda}{\Psi}$ denotes the inner product of the vectors  $\ket{m,\lambda}$ and $\ket{\Psi}$.
Moreover, given that the outcome $m$ occurred, the state of the quantum system immediately after the measurement is
represented by the normalized vector
\vspace*{-1mm}
\begin{equation*}
  \frac{1}{\sqrt{p(m)}}\sum_{\lambda}\braket{m,\lambda}{\Psi}\ket{m,\lambda}.
\end{equation*}
\vspace*{-3mm}

The \emph{commutator} between two operators $A$ and $B$ is defined to be $[A,B]:=AB-BA$.
Let $\mathcal{O}_1,\dots,\mathcal{O}_k$ be observables of a quantum system and
let $A_1,\dots,A_k$ be the Hermitian operators which represent  $\mathcal{O}_1,\dots,\mathcal{O}_k$, respectively.
If the Hermitian operators commute to each other, i.e., $[A_j,A_{j'}]=0$ for all $j,j'=1,\dots,k$,
then we can perform
measurements
of all $\mathcal{O}_1,\dots,\mathcal{O}_k$ simultaneously upon the quantum system in any state.
All dynamical variables which we will consider below are assumed to be observables,
and we will identify any observable with the Hermitian operator which represents it.

In this paper we consider quantum systems consisting of vibrating particles.
The simplest one is the quantum system of \emph{one-dimensional harmonic oscillator}, which consists only of one particle vibrating in one-dimensional space.
The dynamical variables needed to describe the system are just one coordinate $x$ and its conjugate momentum $p$.
The \emph{energy} of the system is an observable, called \emph{Hamiltonian}, and is defined in terms of $x$ and $p$ by
\vspace*{-1mm}
\begin{equation*}
  H=\frac{1}{2m}(p^2+m^2\omega^2 x^2),
\end{equation*}
\vspace*{-3mm}\\
where $m$ is the mass of the oscillating particle and $\omega$ is $2\pi$ times the frequency.
The oscillation of the particle is quantized by the \emph{fundamental quantum condition}
\vspace*{-2mm}
\begin{equation}\label{quantum_condition}
  [x,p]=i\hbar,
\end{equation}
\vspace*{-3mm}\\
where $\hbar$ is  \emph{Planck's constant}.
The \emph{annihilation operator} $a$ of the system is defined by
\vspace*{-1mm}
\begin{equation*}
  a=\sqrt{\frac{m\omega}{2\hbar}}\left(x+\frac{ip}{m\omega}\right).
\end{equation*}
\vspace*{-1mm}\\
Its adjoint $a^\dag$ is called a \emph{creation operator}.
The fundamental quantum condition \eqref{quantum_condition} is then equivalently rewritten as
\vspace*{-1mm}
\begin{equation}\label{acqc}
  [a,a^\dag]=1,
\end{equation}
\vspace*{-3mm}\\
and the Hamiltonian can be represented in the form
\vspace*{-0mm}
\begin{equation}\label{Hada}
  H=\hbar\omega\left(a^\dag a+\frac{1}{2}\right)
\end{equation}
\vspace*{-2mm}\\
in terms of the creation and annihilation operators.
In order to determine the values of energy possible in the system, we must solve the eigenvalue problem of $H$.
This problem is reduced to the eigenvalue problem of the observable $N:=a^\dag a$, called a \emph{number operator}.
Using the condition \eqref{acqc}, the eigenvalue spectrum of $N$ is shown to equal the set $\NN$ of all nonnegative integers.
Each eigenspace of $N$ is not degenerate, and the normalized eigenvector $\ket{n}$ of $N$ belonging to an arbitrary eigenvalue $n\in\NN$ is given by
\vspace*{-1mm}
\begin{equation}\label{evn}
  \ket{n}=\frac{(a^\dag)^n}{\sqrt{n!}}\ket{0},
\end{equation}
\vspace*{-3mm}\\
where $\ket{0}$ is the unique normalized vector up to a phase factor such that $a\ket{0}=0$.
Since $N$ is an observable,
the eigenvectors $\{\ket{n}\}$ forms a complete orthonormal system for the state space.
It follows from \eqref{Hada} that the values of energy possible in the system are
\vspace*{-1mm}
\begin{equation*}
  E_n=\hbar\omega\left(n+\frac{1}{2}\right)\raisebox{.7mm}{,}\qquad(n=0,1,2,\dotsc)
\end{equation*}
\vspace*{-3mm}\\
where the eigenvector of $H$ belonging to an energy $E_n$ is given by \eqref{evn}.

Next we consider the quantum system of \emph{$k$-dimensional harmonic oscillators}
which consists of $k$ one-dimensional harmonic oscillators vibrating independently without no interaction.
The dynamical variables needed to describe the system are $k$ coordinates $x_1,\dots,x_k$ and their conjugate momenta $p_1,\dots,p_k$.
The Hamiltonian of the system is
\vspace*{-2mm}
\begin{equation}\label{k-H}
  H=\sum_{j=1}^k\frac{1}{2m_j}(p_j^2+m_j^2\omega_j^2 x_j^2),
\end{equation}
\vspace*{-2mm}\\
where $m_j$ is the mass of the $j$th one-dimensional harmonic oscillator and $\omega_j$ is $2\pi$ times its frequency.
The vibrations of $k$ oscillators are quantized by the fundamental quantum conditions
\begin{equation}\label{k-quantum_condition}
  [x_j,p_{j'}]=i\hbar\delta_{jj'},\qquad [x_j,x_{j'}]=[p_j,p_{j'}]=0.
\end{equation}
The annihilation operator $a_j$ of the $j$th oscillator is defined by
\vspace*{-1mm}
\begin{equation*}
  a_j=\sqrt{\frac{m_j\omega_j}{2\hbar}}\left(x_j+\frac{ip_j}{m_j\omega_j}\right).
\end{equation*}
\vspace*{-3mm}\\
The adjoint $a_j^\dag$ of $a_j$ is the creation operator of the $j$th oscillator.
The fundamental quantum condition \eqref{k-quantum_condition} is then equivalently rewritten as
\vspace*{-1mm}
\begin{align}
  &[a_j,a_{j'}^\dag]=\delta_{jj'}, \label{k-acqc1}\\
  &[a_j,a_{j'}]=[a_j^\dag,a_{j'}^\dag]=0.\label{k-acqc2}
\end{align}
\vspace*{-3mm}\\
and the Hamiltonian
can be represented in the form
\vspace*{-1mm}
\begin{equation}\label{k-Hada}
  H=\sum_{j=1}^k\hbar\omega_j\left(N_j+\frac{1}{2}\right)
\end{equation}
\vspace*{-3mm}\\
where $N_j:=a_j^\dag a_j$ is the number operator of the $j$th oscillator.
In order to determine the values of energy possible in the system,
we first solve the eigenvalue problems of the number operators $N_1,\dots,N_k$.
We can do this simultaneously for all $N_j$
since the number operators commute to each other, i.e., $[N_j,N_{j'}]=0$ for all $j,j'=1,\dots,k$, due to \eqref{k-acqc1} and \eqref{k-acqc2}.
The eigenvalue spectrum of each $N_j$ is shown to equal $\NN$ using \eqref{k-acqc1}.
We define a vector $\ket{n_1,\dots,n_k}$ as the tensor product $\ket{n_1}\otimes\dots\otimes\ket{n_k}$ of  $\ket{n_1},\dots,\ket{n_k}$,
where each $\ket{n_j}$ is
defined by \eqref{evn} using $a_j$ in place of $a$.
For each $j$, the vector $\ket{n_1,\dots,n_k}$ is a normalized eigenvector of $N_j$ belonging to an eigenvalue $n_j\in\NN$, i.e.,
\begin{equation}\label{kj-evn}
  N_j\ket{n_1,\dots,n_k}=n_j\ket{n_1,\dots,n_k}.
\end{equation}
All the vectors $\{\ket{n_1,\dots,n_k}\}$ form a complete orthonormal system for the state space.
It follows from \eqref{k-Hada} that the values of energy possible in the system are
\begin{equation*}
  E_{n_1,\dots,n_k}=\hbar\sum_{j=1}^k\omega_j\left(n_j+\frac{1}{2}\right)\raisebox{.7mm}{,}\qquad(n_1,\dots,n_k=0,1,2,\dotsc)
\end{equation*}
The vector $\ket{n_1,\dots,n_k}$ is an eigenvector of $H$ belonging to an energy $E_{n_1,\dots,n_k}$.

The Hamiltonian \eqref{k-H} describes the quantum system of $k$-dimensional harmonic oscillators
where each oscillator does not interact with any others and moves independently.
In a general quantum system consisting of $k$-dimensional harmonic oscillators, each oscillator strongly 
interacts with all others.
Its Hamiltonian has the general form
\vspace*{-0mm}
\begin{equation}\label{gH}
  P(a_1,\dots,a_{k},a_1^\dag,\dots,a_{k}^\dag),
\end{equation}
\vspace*{-4mm}\\
where $a_1,\dots,a_k$ are creation operators satisfying the quantum conditions \eqref{k-acqc1} and \eqref{k-acqc2}, and
$P$ is a polynomial in $2k$ variables with coefficients of complex numbers such that \eqref{gH} is Hermitian.%
\footnote{In the monomials appearing in $P$, the order of the variables $x_1,\dots,x_{2k}$ does not matter.
However, since $a_j$ and $a_j^\dag$ do not commute,
in substituting $a_1,\dots,a_{k},a_1^\dag,\dots,a_{k}^\dag$ into the variables of $P$ the order of these operators makes a difference.
Thus, the operator \eqref{gH} makes sense only by specifying this order.}
For example, we can consider the quantum system of $k$-dimensional harmonic oscillators whose Hamiltonian is
\vspace*{-0mm}
\begin{equation*}
  H=\sum_{j}\hbar\omega_j\left(a_j^\dag a_j+\frac{1}{2}\right)+\sum_{j\neq j'}g_{jj'}a_j^\dag a_{j'}.
\end{equation*}
\vspace*{-3mm}\\
Here the \emph{interaction terms}
$g_{jj'}a_j^\dag a_{j'}$
between the $j$th oscillator and the $j'$th oscillator with a real constant $g_{jj'}$ are
added to the Hamiltonian \eqref{k-Hada}.
Note, however, that solving  exactly the eigenvalue problem of an observable in the general form of \eqref{gH} is not an easy task.

\vspace*{-1mm}

\section{A class of unary languages}
\label{CS}

\vspace*{-1mm}

In this section we introduce a class of unary languages for which the representability theorem
proven in the next section holds true.

Let $\NN^{*}$ be the set of all finite sequences $(x_{1}, \dots ,x_{m})$ with elements in
$\NN$ ($m \in \NN$; for $m=0$ we get the empty sequence $\varepsilon$). Let
\vspace*{-1mm}
\begin{equation}
\label{S}
L((x_{1}\dots x_{m}),a)=\left(\prod_{i=1}^{m} \{1^{x_{i}}\}^{*}\right) \{1^{a}\},
\end{equation}
\vspace*{-3mm}\\
for all $(x_{1}, \dots ,x_{m}) \in \NN^{*}, a \in \NN.$

\begin{thm}
Let $ \LL_{0}$ be the minimal class of
languages $\LL$ over $\{1\}$ containing the languages $\{ 1^n \}$ for every $n\in \NN$,
and which is closed under concatenation and the Kleene star operation.
\if01
satisfying the following three conditions:\\
\noindent (1) the language $\{ 1^n \} \in \LL$, for every $n\in \NN$,\\
(2) the class $\LL$ is closed under concatenation,\\
(3) the class $\LL$ is closed under the Kleene star operation.\\
\fi
Then, $\LL_{0} = \{ L((x_{1},\dots ,x_{m}),a)\mid (x_{1},\dots ,x_{m}) \in \NN^{*}, a \in \NN\}$. 
\end{thm}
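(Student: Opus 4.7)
The plan is to prove the two inclusions separately; write $\LL' := \{L((x_{1},\dots ,x_{m}),a)\mid (x_{1},\dots ,x_{m}) \in \NN^{*}, a \in \NN\}$.

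For the inclusion $\LL' \subseteq \LL_{0}$, note that by the defining formula~\eqref{S}, every $L((x_1,\dots,x_m),a)$ is the concatenation of the starred singletons $\{1^{x_1}\}^*,\dots,\{1^{x_m}\}^*$ with $\{1^a\}$, hence it is built from the singletons $\{1^{x_i}\}$ and $\{1^a\}$ using only the two closure operations defining $\LL_{0}$; thus it lies in $\LL_{0}$.

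For the inclusion $\LL_{0} \subseteq \LL'$, by minimality of $\LL_{0}$ it suffices to verify that $\LL'$ contains every singleton $\{1^n\}$ and is closed under both concatenation and Kleene star. The singleton $\{1^n\}$ equals $L(\varepsilon,n)$ (empty $x$-sequence), so it lies in $\LL'$. For concatenation, the crucial observation is that over the one-letter alphabet $\{1\}$ the concatenation of languages is commutative, i.e.\ $AB=BA$ as sets of strings. Applying this commutativity to move the singleton factor $\{1^a\}$ past the subsequent starred factors gives
\begin{equation*}
L((x_1,\dots,x_m),a)\cdot L((y_1,\dots,y_k),b)\;=\;L((x_1,\dots,x_m,y_1,\dots,y_k),\,a+b)\;\in\;\LL'.
\end{equation*}

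The main obstacle is closure under Kleene star. Fix $L:=L((x_1,\dots,x_m),a)$ and identify each word $1^n$ with its length $n\in\NN$. Under this identification $L^{*}$ becomes a \emph{submonoid} of $(\NN,+)$: it contains $0=|\varepsilon|$ and is closed under addition because $L^{*}\cdot L^{*}\subseteq L^{*}$. A classical result from the theory of numerical semigroups (a consequence of Schur's theorem) asserts that every submonoid of $(\NN,+)$ is finitely generated; let $z_1,\dots,z_r\in\NN$ be such generators of $L^{*}$. Then
\begin{equation*}
L^{*}\;=\;\bigl\{1^{N_1 z_1+\cdots+N_r z_r}:N_1,\dots,N_r\in\NN\bigr\}\;=\;L((z_1,\dots,z_r),0)\;\in\;\LL'.
\end{equation*}
Combining the two inclusions gives $\LL_{0}=\LL'$, as desired.
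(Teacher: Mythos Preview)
Your proof is correct, and in fact it repairs a flaw in the paper's own argument. For the inclusion $\LL'\subseteq\LL_0$ and for closure of $\LL'$ under concatenation you and the paper proceed identically. The divergence is in the Kleene star step: the paper asserts the explicit identity
\[
L((x_1,\dots,x_m),a)^{*}=L((x_1,\dots,x_m,a),0),
\]
but this equality is false in general. Take $m=1$, $x_1=2$, $a=3$: then $L((2),3)=\{1^{2n+3}:n\ge 0\}$, so the lengths occurring in $L((2),3)^{*}$ form the submonoid of $\NN$ generated by $\{3,5,7,\dots\}$, namely $\{0,3,5,6,7,8,\dots\}$; whereas $L((2,3),0)=\{1^2\}^{*}\{1^3\}^{*}$ has length set $\{2m+3n:m,n\ge 0\}=\{0,2,3,4,5,\dots\}$, which contains $2$. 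Hence $L((2),3)^{*}\subsetneq L((2,3),0)$.

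Your route through the finite generation of submonoids of $(\NN,+)$ sidesteps this: you do not attempt to name generators of $L^{*}$ in terms of the original data $(x_1,\dots,x_m,a)$, but only invoke the existence of \emph{some} finite generating list $(z_1,\dots,z_r)$, which is precisely what membership in $\LL'$ requires. The cost is an appeal to a standard structural fact about $\NN$ (every submonoid is finitely generated; this follows, e.g., by picking the least element in each residue class modulo the smallest positive element), but the payoff is a proof that actually goes through.
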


\begin{proof} The class  $ \LL_{0}$ has the required properties because
$L(\varepsilon, a) = \{1^{a}\}$, the concatenation of $L((x_{1}, \dots ,x_{m}, a)$ and
$L((y_{1}, \dots ,y_{l}), b)$ is $L((x_{1}, \dots ,x_{m}), a) L((y_{1}, \dots ,y_{l}), b) =L((x_{1}, \dots ,x_{m},y_{1}, \dots ,y_{l}), a+b)$ and
the Kleene star of $L((x_{1}, \dots ,x_{m}), a)$ is $L((x_{1}, \dots ,x_{m}), a)^{*} =L((x_{1},\dots ,x_{m}, a), 0)$.
In view of (\ref{S}), $ \LL_{0}$ is included in every class $\LL$  satisfying the properties in the statement of the theorem.
\qed
\end{proof}

\begin{corollary}
The class $\LL_{0}$ coincides with  the minimal class of
languages $\LL$ over $\{1\}$ which contains the languages $\{ 1^n \}$ and
 $\{ 1^n \}^*$, for every $n\in \NN$ and which is closed under concatenation.
\end{corollary}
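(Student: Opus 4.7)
The plan is to prove a double inclusion between $\LL_0$ and the class $\LL_1$ described in the corollary (the minimal class over $\{1\}$ containing all $\{1^n\}$ and all $\{1^n\}^*$ and closed under concatenation).

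First I would show $\LL_1 \subseteq \LL_0$. By the definition of $\LL_0$, every $\{1^n\}$ lies in $\LL_0$, and $\LL_0$ is closed under concatenation and under Kleene star. In particular, applying Kleene star to $\{1^n\}$ shows that $\{1^n\}^*$ is in $\LL_0$ for every $n \in \NN$. Thus $\LL_0$ satisfies all three closure properties used to define $\LL_1$, so by the minimality of $\LL_1$ we conclude $\LL_1 \subseteq \LL_0$.

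For the reverse inclusion $\LL_0 \subseteq \LL_1$, I would invoke the explicit characterization established in the theorem: every language in $\LL_0$ has the form
\begin{equation*}
L((x_1,\dots,x_m),a) = \{1^{x_1}\}^* \{1^{x_2}\}^* \cdots \{1^{x_m}\}^* \{1^a\},
\end{equation*}
for some $(x_1,\dots,x_m)\in\NN^*$ and $a\in\NN$. By the hypotheses defining $\LL_1$, each factor $\{1^{x_i}\}^*$ and the final factor $\{1^a\}$ belongs to $\LL_1$. Since $\LL_1$ is closed under concatenation, the entire product lies in $\LL_1$, and we get $\LL_0 \subseteq \LL_1$, which combined with the previous inclusion yields $\LL_0 = \LL_1$.

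The argument is essentially routine; there is no real obstacle. The only conceptual point worth emphasising is that one does not need the full closure of $\LL_1$ under the Kleene star operation: it suffices to have Kleene-star-closures of the atomic singleton languages $\{1^n\}$, because the theorem has already shown that every language in $\LL_0$ can be written in a canonical form in which Kleene stars appear only at the atomic level. This is precisely what makes the weaker closure properties in the corollary sufficient to recover $\LL_0$.
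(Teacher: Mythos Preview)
Your argument is correct and is precisely the intended one: the paper states the corollary without proof, relying implicitly on the explicit description of $\LL_0$ from the preceding theorem, which is exactly what you invoke for the inclusion $\LL_0\subseteq\LL_1$. The other inclusion is, as you note, immediate from minimality.
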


\begin{comment}
i) If $L$ is a finite unary language with more than one element, then $L\not\in \LL_{0}$.\\
ii) The family $\LL_{0}$ is a proper subset of the class of regular (equivalently, context-free) languages.\\
iii) The language $\{1^{p}\mid p \mbox{  is prime}\}$ is not in  $\LL_{0}$.
\end{comment}

Consider  the minimal class $\D_{0}$ of subsets of $ \mathbb{N}$ 
containing  the sets $\{b\}$, for every $b\in \mathbb{N}$, and which is closed under the sum and the Kleene star operation.
Here the sum of the sets
 $S, T$ is the set $S+T=\{a+b \mid a\in S, b\in T\}$; the Kleene star of the set $S$ is the set
 $S^{*}=\{a_{1}+ a_{2} + \dots + a_{k}\mid k \ge 0,  a_{i}\in S, 1\le  i \le k\}$.

\begin{thm}The following equality holds true:
\label{DLL}
$\LL_{0}= \{\{1^{a}\mid a \in S\}\mid S\in \D_{0}\}.$
\end{thm}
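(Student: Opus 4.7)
The plan is a straightforward double inclusion proved by structural induction, powered by the observation that the map $\phi(L)=\{n\in\NN\mid 1^n\in L\}$ is a bijection between unary languages over $\{1\}$ and subsets of $\NN$ that carries language concatenation to the sum operation on $\D_0$ and language Kleene star to the set Kleene star defined just above the theorem. I would record this as a small lemma: for unary $L_1,L_2$, $\phi(L_1 L_2)=\phi(L_1)+\phi(L_2)$, and $\phi(L_1^*)=\phi(L_1)^*$. Both identities are immediate from definitions, since $1^a 1^b = 1^{a+b}$ and an arbitrary word in $L_1^*$ has the form $1^{a_1}\cdots 1^{a_k}=1^{a_1+\cdots+a_k}$ with each $a_i\in\phi(L_1)$.

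For the inclusion $\LL_0\subseteq\{\{1^a\mid a\in S\}\mid S\in\D_0\}$, I would argue by induction on the construction of $\LL_0$. The generators $\{1^n\}$ correspond to the singletons $\{n\}\in\D_0$. For the inductive step, if $L_i=\{1^a\mid a\in S_i\}$ with $S_i\in\D_0$ for $i=1,2$, then by the lemma $L_1L_2=\{1^a\mid a\in S_1+S_2\}$ and $L_1^*=\{1^a\mid a\in S_1^*\}$, and since $\D_0$ is closed under sum and Kleene star, both sides belong to the claimed family. Since $\LL_0$ is the minimal class with its closure properties, this inclusion follows.

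For the reverse inclusion, I would dualize the argument: show by induction on the construction of $\D_0$ that for every $S\in\D_0$ the language $\{1^a\mid a\in S\}$ lies in $\LL_0$. The generators $\{b\}\in\D_0$ give the languages $\{1^b\}\in\LL_0$. If $S_1,S_2\in\D_0$ give languages $L_1,L_2\in\LL_0$, then $\{1^a\mid a\in S_1+S_2\}=L_1L_2\in\LL_0$ and $\{1^a\mid a\in S_1^*\}=L_1^*\in\LL_0$ by closure of $\LL_0$ under concatenation and Kleene star. Minimality of $\D_0$ then yields the inclusion.

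I don't expect any real obstacle: everything reduces to the trivial fact that concatenation of unary strings corresponds to addition of exponents. The only point requiring a line of care is that the set-theoretic Kleene star defined in the excerpt includes the $k=0$ case, i.e.\ $0\in S^*$, which matches $\varepsilon\in L^*$ under $\phi$ (recall $\varepsilon=1^0$); this keeps the bijection honest at the base of the induction. With that remark the two inductions combine to give the equality $\LL_0=\{\{1^a\mid a\in S\}\mid S\in\D_0\}$. \qed
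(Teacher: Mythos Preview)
Your proof is correct. The paper does not actually supply a proof of this theorem---it is stated and then immediately used to identify $\LL_0$ with $\D_0$---so there is nothing to compare against directly. Your structural-induction argument via the length map $\phi(L)=\{n\mid 1^n\in L\}$ is the natural one and goes through without difficulty; the one delicate point (that $0\in S^*$ matches $\varepsilon=1^0\in L^*$) you have already flagged. An alternative route, suggested by the paper's setup, would be to invoke the explicit normal form established in the preceding theorem (every member of $\LL_0$ is some $L((x_1,\dots,x_m),a)$) and observe that under $\phi$ these correspond exactly to the sets $\{x_1\}^*+\dots+\{x_m\}^*+\{a\}$, which one checks are precisely the members of $\D_0$ by the same minimality argument. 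Your direct induction is at least as clean and has the advantage of not depending on that normal-form result.
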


Based on the above theorem, we identify $\LL_{0}$ with $\D_{0}$ in what follows.

\vspace*{-1mm}

\section{The representation theorem}
\label{RT}

\vspace*{-1mm}

Can a set $S\in \D_{0}$ be represented as the outcomes of a quantum measurement?  We answer this question in the affirmative. First we show that the sets in $\D_{0}$ can be generated by polynomials with nonnegative integer coefficients.

\begin{prop} \label{rep} For every set $S\in \D_{0}$ there exists a polynomial with nonnegative integer coefficients $F_{S}$ in variables $x_1,\dots,x_k$
such that $S$ can be represented as:
\vspace*{-1mm}
\begin{equation}\label{range}
 S=\{F_{S}(n_1,\dots,n_k)\mid n_1,\dots,n_k\in\NN\}.
\end{equation}
\vspace*{-5mm}
\end{prop}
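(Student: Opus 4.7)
The plan is to exploit the structural characterization already obtained in Theorem~1 (transported to $\D_0$ via Theorem~\ref{DLL}), so that no fresh induction on the generation of $\D_0$ is needed. In fact, the polynomial turns out to be linear, and the proof will just amount to reading off its coefficients from the canonical form of $S$.

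First I would apply Theorem~1 together with Theorem~\ref{DLL} to pick a canonical representation of $S$. Since $S\in\D_0$, the associated unary language lies in $\LL_0$, so there exist $m\in\NN$, $x_1,\dots,x_m\in\NN$ and $a\in\NN$ with
\begin{equation*}
  S=\{\,n_1 x_1+n_2 x_2+\dots+n_m x_m+a \;\mid\; n_1,\dots,n_m\in\NN\,\},
\end{equation*}
obtained by decoding $L((x_1,\dots,x_m),a)=\bigl(\prod_{i=1}^m \{1^{x_i}\}^*\bigr)\{1^a\}$: a word $1^{n_1 x_1}\cdots 1^{n_m x_m}1^a$ corresponds to the integer $n_1 x_1+\cdots+n_m x_m+a$.

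Second, I would simply define
\begin{equation*}
  F_S(y_1,\dots,y_m):=x_1 y_1+x_2 y_2+\cdots+x_m y_m+a.
\end{equation*}
All coefficients $x_1,\dots,x_m,a$ are nonnegative integers by construction, so $F_S$ is a polynomial with nonnegative integer coefficients (in fact, a linear polynomial). The equality \eqref{range} is then just the canonical description of $S$ rewritten, with the variables $n_i$ of the canonical form playing the role of the arguments of $F_S$. The degenerate cases $m=0$ (giving the singleton $\{a\}$) and $x_i=0$ (the variable $y_i$ is redundant) are handled uniformly by the same formula.

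There is no real obstacle here: all the combinatorial content was absorbed into Theorem~1 and Theorem~\ref{DLL}, which showed that the sum-and-Kleene-star closure collapses to the single normal form above. The only point worth being careful about is making sure the decoding of $L((x_1,\dots,x_m),a)$ into a subset of $\NN$ matches the sum/Kleene-star operations on $\D_0$, which is exactly the content of Theorem~\ref{DLL}. Once that is in hand, the proposition is immediate, and in particular it tells us that every $S\in\D_0$ is the range of a \emph{linear} polynomial with coefficients in $\NN$, which is the feature that will be used in the next section when these polynomials are substituted into the number operators $N_j$ to build the Hamiltonian.
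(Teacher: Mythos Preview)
Your proposal is correct and follows essentially the same route as the paper: invoke Theorem~\ref{DLL} (together with the normal form~\eqref{S}) to write $S=\{a_1 n_1+\dots+a_k n_k+a\mid n_1,\dots,n_k\in\NN\}$ and then take the linear polynomial $F_S(x_1,\dots,x_k)=a_1 x_1+\dots+a_k x_k+a$. The paper's proof is just a terser version of yours.
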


\begin{proof}
Suppose that $S\in \D_{0}$.
It follows from Theorem~\ref{DLL} and \eqref{S} that there exist $a_1,\dots,a_k,a\in\NN$ such that
$S=\{a_1 n_1 + \dots + a_k n_k+a\mid n_1,\dots,n_k\in\NN\}$.
Thus,
\eqref{range} holds for the polynomial $F_S(x_1,\dots,x_k)=a_1 x_1 + \dots + a_k x_k+a$.
\qed
\end{proof}

\begin{comment} There exist infinitely many sets not in $\D_{0}$ which are representable  in the form (\ref{range}).
\end{comment}

Motivated by Proposition~\ref{rep},  we  show that every set
\vspace*{-1mm}
\begin{equation}\label{rangepol}
  S=\{F(n_1,\dots,n_k)\mid n_1,\dots,n_k\in\NN\},
\end{equation}
\vspace*{-4mm}\\
where $F$ is a polynomial in $k$ variables with
nonnegative integer coefficients,
can be represented by the set of outcomes of a \emph{constructive} quantum measurement.
For this purpose, we focus  on a quantum system consisting of $k$-dimensional harmonic oscillators whose Hamiltonian has the form
\vspace*{-0mm}
\begin{equation}\label{FH}
  H=F(N_1,\dots,N_k),
\end{equation}
\vspace*{-5mm}\\
where $N_1,\dots,N_k$ is the number operators defined by $N_j=a_j^\dag a_j$ with the annihilation operator $a_j$ of the $j$th oscillator.
Note that the substitution of $N_1,\dots,N_k$ into the variables of $F$ is unambiguously defined
since the number operators $N_1,\dots,N_k$ commute to each other.
This type of Hamiltonian is a special case of \eqref{gH}.

We say
an observable
of the form \eqref{gH} is \emph{constructive} if all coefficients of $P$ are in the form of $p+qi$ with $p,q\in\QQ$.
Thus, the Hamiltonian \eqref{FH} is constructive by definition.
Actually,  a measurement of the Hamiltonian \eqref{FH} can be performed \emph{constructively} in an intuitive sense.
The constructive measurement
consists of the following two steps:
First, the simultaneous measurements of the number operators $N_1,\dots,N_k$ are performed upon the quantum system
to produce the outcomes $n_1,\dots,n_k\in\NN$ for $N_1,\dots,N_k$, respectively.
This is possible since the number operators commute to each other.
Secondly, $F(n_1,\dots,n_k)$ is calculated and is regarded as the outcome of the measurement of the Hamiltonian \eqref{FH} itself.
This is constructively possible since $F$ is a polynomial with integer coefficients.
Thus, the whole measurement process is constructive in an intuitive sense too.

\begin{thm}\label{representability}
For every set $S$ of the form \eqref{rangepol}
there exists a constructive Hamiltonian $H$ such that the set of all possible outcomes of a measurement of $H$ is $S$.
\end{thm}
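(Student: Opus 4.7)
The plan is to take exactly the Hamiltonian suggested by the discussion preceding the statement: namely $H = F(N_1,\dots,N_k)$, where $N_j = a_j^\dag a_j$ is the number operator of the $j$th oscillator in a quantum system of $k$ independent (non-interacting) harmonic oscillators, as described in Section~\ref{QM}. First I would set up this quantum system, check that $H$ is well-defined and constructive, and then read off its spectrum directly from the joint eigenvectors of the $N_j$'s.

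The verification that $H$ qualifies as a constructive Hamiltonian has two parts. On the one hand, the coefficients of $F$ are nonnegative integers, which are of the form $p+qi$ with $p,q\in\QQ$, so $H$ has the form \eqref{gH} with rational-complex coefficients and is constructive by definition. On the other hand, because the operators $N_1,\dots,N_k$ commute pairwise (by \eqref{k-acqc1} and \eqref{k-acqc2}), the substitution of $N_1,\dots,N_k$ into the variables of $F$ is unambiguous, and since each $N_j$ is Hermitian the resulting operator is Hermitian (the coefficients of $F$ being real). This sidesteps the ordering subtlety flagged in the footnote about \eqref{gH}.

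Next I would determine the spectrum. By the analysis in Section~\ref{QM}, the tensor-product vectors $\ket{n_1,\dots,n_k}$, with $n_1,\dots,n_k\in\NN$, form a complete orthonormal system for the state space and are simultaneous eigenvectors of $N_1,\dots,N_k$ with $N_j\ket{n_1,\dots,n_k}=n_j\ket{n_1,\dots,n_k}$ by \eqref{kj-evn}. Applying $H = F(N_1,\dots,N_k)$ term by term yields
\begin{equation*}
  H\ket{n_1,\dots,n_k} = F(n_1,\dots,n_k)\ket{n_1,\dots,n_k},
\end{equation*}
so every element of the set on the right-hand side of \eqref{rangepol} is an eigenvalue of $H$, and completeness of $\{\ket{n_1,\dots,n_k}\}$ ensures there are no others. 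Hence the spectrum of $H$ is exactly $S$, and $H$ is an observable.

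There is no real obstacle here; the statement is essentially a dictionary-style reformulation of the spectral properties of the Hamiltonian \eqref{k-Hada} with its linear energy replaced by the polynomial $F$. The only point that requires any care is the fact that commutativity of the $N_j$'s is needed both to make the substitution $F(N_1,\dots,N_k)$ unambiguous and to guarantee a common eigenbasis; once that is noted, the rest is immediate from the material already developed in Section~\ref{QM} together with Proposition~\ref{rep}.
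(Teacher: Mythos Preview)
Your proposal is correct and follows essentially the same route as the paper: take $H=F(N_1,\dots,N_k)$, note it is constructive, and use the simultaneous eigenbasis $\{\ket{n_1,\dots,n_k}\}$ together with \eqref{kj-evn} to identify the spectrum with $S$. The only difference is cosmetic: where you invoke completeness of $\{\ket{n_1,\dots,n_k}\}$ to conclude ``there are no other eigenvalues,'' the paper spells this out by expanding an arbitrary eigenvector in that basis and comparing coefficients.
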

\begin{proof}
Consider the Hamiltonian $H$ of the form \eqref{FH}.
It is constructive, as we saw above.
We show that the eigenvalue spectrum of $H$ equals to $S$.

First, using \eqref{kj-evn} we get
\vspace*{-1mm}
\begin{equation}\label{F-evn}
  F(N_1,\dots,N_k)\ket{n_1,\dots,n_k}=F(n_1,\dots,n_k)\ket{n_1,\dots,n_k}
\end{equation}
\vspace*{-3mm}\\
for every $n_1,\dots,n_k\in \NN$.
Thus, every element of $S$ is an eigenvalue of $H$.
Conversely, suppose that $E$ is an arbitrary eigenvalue of $H$.
Then there exists a nonzero vector $\ket{\Psi}$ such that $H\ket{\Psi}=E\ket{\Psi}$.
Since all vectors $\{\ket{n_1,\dots,n_k}\}$ form a complete orthonormal system for the state space,
there exist complex numbers $\{c_{n_1,\dots,n_k}\}$ such that
$\ket{\Psi}=\sum_{n_1,\dots,n_k}c_{n_1,\dots,n_k}\ket{n_1,\dots,n_k}$.
It follows from \eqref{F-evn} that
\vspace*{-2mm}
$$\sum_{n_1,\dots,n_k}c_{n_1,\dots,n_k}F(n_1,\dots,n_k)\ket{n_1,\dots,n_k}=\sum_{n_1,\dots,n_k}c_{n_1,\dots,n_k}E\ket{n_1,\dots,n_k}.$$
\vspace*{-3mm}\\
Since the vectors $\{\ket{n_1,\dots,n_k}\}$ are independent, we have
\vspace*{-0mm}
\begin{equation}\label{cind}
  c_{n_1,\dots,n_k}(E-F(n_1,\dots,n_k))=0,
\end{equation}
\vspace*{-4mm}\\
for all $n_1,\dots,n_k\in\NN$.
Since $\ket{\Psi}$ is nonzero, $c_{\bar{n}_1,\dots,\bar{n}_k}$ is also nonzero for some $\bar{n}_1,\dots,\bar{n}_k\in\NN$.
It follows from \eqref{cind} that $E=F(\bar{n}_1,\dots,\bar{n}_k)$.
\qed
\end{proof}

\vspace*{-2.5mm}

\section{An application to quantum provability}
\label{QP}

\vspace*{-0.5mm}

Let $S$ be a set of the form \eqref{rangepol}.
In the proof of Theorem~\ref{representability},
we consider the measurement of the Hamiltonian of the form \eqref{FH}.
In the case where the state $\ket{\Psi}$
over which
the measurement of the Hamiltonian is performed is chosen randomly,
an element of $S$ is generated randomly as the measurement outcome.
In this manner, by infinitely many repeated measurements  we get exactly the set $S$.

If the set $S$ codes the ``theorems'' of a formal system $\FS$---which is possible as $S$ is computable---then
$F(n_1,\dots,n_k)\in S$ is a \emph{theorem} of $\FS$ and
the numbers $n_1,\dots,n_k$ play the role of the \emph{proof} which certifies it.

Suppose that a \emph{single} measurement of the Hamiltonian of the form \eqref{FH} was performed upon a quantum system in a state
 represented by a normalized vector $\ket{\Psi}$
to produce an outcome $m\in S$, i.e., a theorem.
Then, by the definition of theorems, there exists a proof  $n_1,\dots,n_k$ which makes $m$ a theorem, i.e., which satisfies $m=F(n_1,\dots,n_k)$.
Can we extract the proof $n_1,\dots,n_k$ after the measurement?
This can be possible in the following manner:
Immediately after the measurement, the system is in the state represented by the normalized vector $\ket{\Phi}$ given by
\vspace*{-2mm}
\begin{equation*}
  \ket{\Phi}=\frac{1}{\sqrt{C}}\sum_{m=F(n_1,\dots,n_k)}\braket{n_1,\dots,n_k}{\Psi}\ket{n_1,\dots,n_k},
\end{equation*}
\vspace*{-3mm}\\
where $C$ is
the probability of getting the outcome $m$ in the measurement given:
\vspace*{-0mm}
\begin{equation*}
  C=\sum_{m=F(n_1,\dots,n_k)}\abs{\braket{n_1,\dots,n_k}{\Psi}}^2.
\end{equation*}
\vspace*{-3mm}\\
Since the number operators $N_1,\dots,N_k$ commute to each other,
we can perform the simultaneous measurements of $N_1,\dots,N_k$ upon the system in the state $\ket{\Phi}$.
Hence, by performing the measurements of $N_1,\dots,N_k$,
we obtain any particular outcome $n_1,\dots,n_k$ with probability $\abs{\braket{n_1,\dots,n_k}{\Phi}}^2$.
Note that
\vspace*{-0mm}
\begin{equation*}
  \sum_{m=F(n_1,\dots,n_k)}\abs{\braket{n_1,\dots,n_k}{\Phi}}^2=\sum_{m=F(n_1,\dots,n_k)}\abs{\braket{n_1,\dots,n_k}{\Psi}}^2/C=1.
\end{equation*}
\vspace*{-3mm}\\
Thus, with probability one we obtain some outcome $n_1,\dots,n_k$ such that $m=F(n_1,\dots,n_k)$.
In this manner we can
immediately
extract the proof $n_1,\dots,n_k$ of the theorem $m\in S$  obtained
as a measurement outcome.

\vspace*{-2mm}

\section{A conjecture}
\label{Cnj}

\vspace*{-1mm}

In the early 1970s, Matijasevi\v{c}, Robinson, Davis, and Putnam solved negatively Hilbert's tenth problem by proving the MRDP theorem (see Matijasevi\v{c} \cite{Matijasevic93} for details)
which states that every computably enumerable subset of $\NN$ is Diophantine.
A subset $S$ of $\NN$ is called \textit{computably enumerable} if there exists a (classical) Turing machine that,
when given $n\in\NN$ as an input, eventually halts if $n\in S$ and otherwise runs forever.
A subset $S$ of $\NN$ is \textit{Diophantine} if
there exists a polynomial $P(x,y_1,\dots,y_k)$ in variables $x,y_1,\dots,y_k$ with integer coefficients such that,
for every $n\in\NN$, $n\in S$ if and only if there exist $m_1,\dots,m_k\in\NN$ for which $P(n,m_1,\dots,m_k)=0$.\\[-1ex]

Inspired by the MRDP theorem, we conjecture the following:

\vspace*{-0mm}

\begin{conject}\label{T}
For every computably enumerable subset $S$ of $\NN$,
there exists
a constructive observable $A$ of the form of \eqref{gH} whose eigenvalue spectrum equals $S$.
\end{conject}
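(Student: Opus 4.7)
The plan is to reduce the statement to a Diophantine description of $S$ via the MRDP theorem and then feed that description into the machinery of Theorem~\ref{representability}. By MRDP there is a polynomial $P(x,y_1,\dots,y_k)\in\ZZ[x,y_1,\dots,y_k]$ such that $n\in S$ if and only if $P(n,m_1,\dots,m_k)=0$ for some $m_1,\dots,m_k\in\NN$. On the quantum system of $k+1$ harmonic oscillators with number operators $N_0,\dots,N_k$, I would first consider the auxiliary polynomial
\begin{equation*}
Q(x_0,x_1,\dots,x_k)=(x_0+1)\bigl(1-P(x_0,x_1,\dots,x_k)^2\bigr)-1,
\end{equation*}
which takes the value $x_0$ when $P(x_0,\dots,x_k)=0$ and is at most $-1$ otherwise. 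Setting $H_0=Q(N_0,\dots,N_k)$ yields a Hermitian operator of the form \eqref{gH} with integer (hence constructive) coefficients, and the argument of the proof of Theorem~\ref{representability} shows that its eigenvalue spectrum equals the range of $Q$ over $\NN^{k+1}$, namely $S\cup T$ for some set $T$ of strictly negative integers.

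The remaining task is to surgically remove the negative eigenvalues in $T$ without touching those in $S$. Three strategies seem plausible. First, one can try to replace $H_0$ by an operator $H=F(H_0,a_j,a_j^{\dag})$, where $F$ is manufactured from the noncommuting pair $a_j,a_j^{\dag}$ so that it acts as the identity on nonnegative eigenspaces of $H_0$ and maps eigenvectors for eigenvalues in $T$ into a fixed eigenspace for some value already in $S$. Second, one can couple the original $k+1$ oscillators to one or more auxiliary oscillators via polynomial interaction terms, chosen so that the joint spectrum of the resulting Hamiltonian is exactly $S$. Third, one can look for a ``Dirac square root'' presentation $H=D^{\dag}D+R(N_0,\dots,N_k)$ with $D$ a carefully chosen noncommutative polynomial, using the manifest nonnegativity of $D^{\dag}D$ to clean up the unwanted part of the spectrum.

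The main obstacle, and the reason the statement remains a conjecture, is that any polynomial in the commuting number operators $N_0,\dots,N_k$ has spectrum equal to its range on $\NN^{k+1}$, and the range of an integer polynomial over $\NN^{k+1}$ is in general a strict superset of the c.e.\ set one is trying to represent: it is known that every c.e.\ set of natural numbers is the set of \emph{positive} values of such a polynomial, while its full range typically also contains extra negative integers. Consequently any successful construction must genuinely exploit the noncommutativity between $a_j$ and $a_j^{\dag}$ that the general form \eqref{gH} allows, and the hard part will be to control exactly the eigenvalue spectrum of a Hermitian polynomial of degree at least three in noncommuting variables, since such spectra admit no closed form in general. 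A concrete path worth trying is to start with a bilinear ansatz that is diagonalizable via a Bogoliubov transformation and then perturb by higher-order Hermitian terms with rational coefficients tuned to annihilate precisely the eigenvalues in $T$ while leaving the eigenvalues in $S$ untouched.
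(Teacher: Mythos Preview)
The statement you were asked to prove is labeled \emph{Conjecture} in the paper, and the paper does not supply a proof; it only offers a heuristic discussion of why the authors believe it plausible. So there is no ``paper's proof'' to match, and your proposal is, appropriately, not a proof either but a sketch of possible attack lines together with an honest identification of the obstruction.

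Your approach and the paper's heuristic differ substantially. You go through MRDP to get a Diophantine polynomial $P$, form $Q(x_0,\dots,x_k)=(x_0+1)(1-P^2)-1$, and then try to excise the spurious negative eigenvalues of $Q(N_0,\dots,N_k)$ by exploiting noncommutativity (Bogoliubov-type diagonalisation, auxiliary oscillators, or a $D^\dag D$ square-root trick). The paper instead proposes to bypass MRDP and encode the entire computation history of a Turing machine recognising $S$ directly into the eigenvectors of $A$: the coefficients $\{c_{n_1,\dots,n_k}\}$ of an eigenstate in the basis $\{\ket{n_1,\dots,n_k}\}$ are to store successive configurations of the machine, with the shift action $a_j^\dag\ket{\dots,n_j,\dots}=\sqrt{n_j+1}\,\ket{\dots,n_j+1,\dots}$ linking consecutive time steps; non-halting inputs would then force infinitely many nonzero coefficients and hence a non-normalisable ``eigenvector'', ruling them out of the spectrum. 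Your route has the virtue that the commutative part is rigorous and the difficulty is isolated cleanly (remove a set $T\subseteq\ZZ_{<0}$ from the spectrum without disturbing $S$), but none of your three surgical strategies comes with a mechanism guaranteeing the resulting operator stays a \emph{polynomial} in $a_j,a_j^\dag$ with Gaussian-rational coefficients; in particular, ``tuning'' perturbation coefficients to kill infinitely many unwanted eigenvalues is not obviously achievable with finitely many monomials. The paper's route is more speculative still---it does not even specify $A$---but it at least suggests why genuine noncommutativity, rather than post-hoc spectral surgery, might be the right organising principle.
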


\vspace*{-1mm}

Conjecture~\ref{T} implies that when we perform a measurement of the observable $A$,
 a member of the computably enumerable $S$ is stochastically obtained as a measurement outcome.
As we indefinitely repeat  measurements of $A$, members of $S$ are being enumerated, just like a Turing machine enumerates $S$.

In this way 
a new type of quantum mechanical computer is postulated to exist. How can we construct it?
Below we discuss some properties of this hypothetical quantum computer.

As in the proof of the MRDP theorem---in which a whole computation history of a Turing machine is encoded in
(the base-two expansions of) the values of variables of a Diophantine equation---{\em a whole computation history of a Turing machine is
encoded in a single quantum state which does not make time-evolution
(in the Schr\"odinger picture).}
Namely, a whole computation history of the Turing machine $M$ which recognises $S$ is encoded in an eigenstate of the observable $A$ which
is designed appropriately using the creation and annihilation operators.
To be precise, let $\ket{\Psi}=\sum_{n_1,\dots,n_k}c_{n_1,\dots,n_k}\ket{n_1,\dots,n_k}$ be an eigenvector of $A$ belonging to an eigenvalue
$n\in S$
such that each coefficient $c_{n_1,\dots,n_k}$ is drawn from a certain finite set of complex numbers including $0$ and
the set $\{(n_1,\dots,n_k)\mid c_{n_1,\dots,n_k}\neq 0\}$  is finite.
The whole computation history of $M$ with the input
$n$
is encoded
in the coefficients $\{c_{n_1,\dots,n_k}\}$ of $\ket{\Psi}$
such that each finite subset obtained by dividing appropriately $\{c_{n_1,\dots,n_k}\}$ represents
the configuration (i.e., the triple of the state, the tape contents, and the head location)
of the Turing machine $M$
at the corresponding time step.
The observable $A$ is  constructed such that its eigenvector encodes the whole computation history of $M$,
using the properties of the creation and annihilation operators such as
\begin{equation*}
  a_j^\dag\ket{n_1,\dots,n_{j-1},n_j,n_{j+1},\dots,n_k}=\sqrt{n_j+1}\ket{n_1,\dots,n_{j-1},n_j+1,n_{j+1},\dots,n_k},
\end{equation*}
by which the different time steps are connected in the manner corresponding to the Turing machine computation of $M$.
In the case of $n\notin S$, the machine $M$ with the input $n$ does not halt.
This implies that the length of the whole computation history is infinite and
therefore the set $\{(n_1,\dots,n_k)\mid c_{n_1,\dots,n_k}\neq 0\}$ is infinite,
which results in that the norm of $\ket{\Psi}$ being indefinite and hence $\ket{\Psi}$ not being an eigenvector of $A$.
In this manner, any eigenvalue of $A$ is limited to a
member~of~$S$.

Note that there are many computation histories of a Turing machine depending on its input.
In the proposed quantum mechanical computer,
the measurement of $A$ chooses one of the computation histories stochastically and
the input corresponding to the computation history is obtained as a measurement outcome.
The above analysis shows that  Conjecture~\ref{T} is likely to be true.

The main feature of the proposed quantum mechanical computer is that {\em the evolution of computation  does not correspond to the time-evolution of the underlying quantum system}. Hence, in contrast with a conventional quantum computer,  the evolution of computation does not have to form a unitary time-evolution, so  it is not negatively influenced by \textit{decoherence}\footnote{Decoherence, which is induced by the interaction of quantum registers with the external environment, destroys the superposition of states of the quantum registers,
which plays an essential role in a conventional quantum computation.}, a serious obstacle to the physical realisation of a conventional quantum computer.

Again, in contrast with a conventional quantum computer, this proposed
quantum mechanical computer can be physically realisable even as a solid-state device at room temperature (the lattice vibration of solid crystal, i.e., \emph{phonons}),
which strongly interacts with the external environment.
A member of
$S$
is obtained as a measurement outcome in an instant
by measuring the observable $A$.
For example, in the case when the observable $A$ is the Hamiltonian of a quantum system, the
measurement outcome corresponds to the energy of
the system.
In this case, we can probabilistically decide---with sufficiently small error probability---whether a given
$n\in\NN$ is in
$S$:
the quantum system is first prepared
in a state $\ket{\Psi}$ such that
the expectation value
$\braopket{\Psi}{A}{\Psi}$ of the measurement of the energy over $\ket{\Psi}$ is approximately $n$,
and then the measurement is actually performed.
This
computation deciding the membership of $n$ to $S$ terminates in an instant
if sufficiently high amount of energy (i.e., around $n$) is
pumped.

\vspace*{-2mm}

\section{Quantum proving without giving the proof}
\label{QPP}

\vspace*{-2mm}

In Section~\ref{QP} we discussed the quantum provability for a formal system whose theorems can be coded
by a set $S$ defined as in \eqref{rangepol}. 
When an element $m$ is obtained as an outcome of the measurement,
we can extract the proof $n_1,\dots,n_k$ which certifies that $m$ is a theorem of the formal system $\FS$, i.e.,
it satisfies $m=F(n_1,\dots,n_k)$,
by performing the second measurement
over the state
immediately after the first measurement.

Actually, the proof $n_1,\dots,n_k$ may be generated slightly before the theorem $F(n_1,\dots,n_k)$ is obtained, like in the classical scenario.
As we saw in Section~\ref{RT}, the measurement of $F(N_1,\dots,N_k)$ can first  be performed by simultaneous measurements of the number operators $N_1,\dots,N_k$ to produce the outcomes $n_1,\dots,n_k\in\NN$; then, the  theorem  $m=F(n_1,\dots,n_k)$, classically calculated from $n_1,\dots,n_k$, can be regarded as the outcome of the measurement of $F(N_1,\dots,N_k)$ itself.

In general, the set of all theorems of a (recursively axiomatisable) formal system, such as Peano Arithmetic or ZFC, forms a computably enumerable set and not  a computable set of the form \eqref{rangepol}.
In what follows,
we argue the plausibility that, for general formal systems, the proof cannot be obtained immediately after the theorem was obtained
via the quantum procedure proposed
in the previous section.

Fix a formal system whose theorems form a computably enumerable set. As before we identify a formula with a natural number.
Let $M$ be a Turing machine such that, given a formula $F$ as an input,
$M$ searches all proofs one by one and halts if $M$ finds the proof of $F$.
Assume that Conjecture~\ref{T} holds.
Then there exists an observable $A$ of an infinite dimensional quantum system such that
$A$ is constructive and the eigenvalue spectrum of $A$ is exactly the set of all provable formulae.
Thus, we obtain a provable formula as a measurement outcome each time we perform a measurement of $A$;
it is stochastically determined which provable formula is obtained.
The probability of getting a specific provable formula $F$ as a measurement outcome depends on the choice of the state $\ket{\Psi}$ on which we perform the measurement of $A$. In some cases the probability can be
 very low,
and therefore we may be able to get the provable formula $F$ as a measurement outcome only once,
even if we repeat the measurement of $A$ on $\ket{\Psi}$ many times.
 
Suppose that, in this manner, we have performed the measurement of $A$ once and then we have obtained a specific provable formula $F$ as a measurement outcome.
Then, where is the proof of $F$?
In the quantum mechanical computer discussed in Section~\ref{Cnj}, the computation history of the Turing machine $M$ is encoded in an eigenstate of the observable $A$, hence the proof of $F$ is encoded in the eigenstate of $A$,
which is the state of the underlying quantum system immediately after the measurement.

Is it possible to extract the proof of $F$ from this eigenstate?
In order to extract the proof of $F$ from this eigenstate, it is necessary to perform an additional measurement on this eigenstate. 
However, it is impossible to determine the eigenstate
in terms of the basis $\{\ket{n_1,\dots,n_k}\}$
completely by a \emph{single} measurement due the principle of quantum mechanics.
In other words,
there does not exist a POVM measurement which can determine
all the expansion coefficients $\{c_{n_1,\dots,n_k}\}$ of the eigenstate
with respect to
the basis $\{\ket{n_1,\dots,n_k}\}$ up to a global factor with nonzero probability.
This eigenstate is destroyed after the additional measurement and therefore we cannot perform any measurement on it any more.
We cannot copy the eigenstate prior to the additional measurement due to the no-cloning theorem (see \cite{BH});  and even if we start again from the measurement of $A$, we
may have little chance of getting the same provable formula $F$ as a measurement outcome.

The above analysis suggests that even if we get a certain provable formula $F$ as a measurement outcome through the measurement of $A$ it is  very difficult or unlikely to simultaneously obtain the proof of $F$.%
\footnote{For the formal system $\mathcal{S}$  in Section~\ref{QP},
we can obtain a theorem and its proof simultaneously via measurements
since the observable $F(N_1,\dots,N_k)$ whose measurements produce ``theorems'' is a function of
the commuting observables $N_1,\dots,N_k$ whose measurements produce ``proofs''.
However, this is
unlikely to be true   for general formal systems.}
This argument
 suggests that {\em for a general formal system proving that a formula is a theorem
 is different from writing up the proof of the formula.} Of course, since
  $F$ is provable, there is a proof of $F$, hence the Turing machine $M$ with the input $F$  will eventually produce that proof.
However, this classical computation may take a long time in contrast with the fact---via the measurement of $A$---it took only a moment to know that the formula $F$ is provable.

As mathematicians guess true facts for no apparent reason we can speculate that human intuition might work as in the above described quantum scenario.
As the proposed quantum mechanical computer can operate at room temperature
it may be even possible that a similar quantum mechanical process works in the human brain those offering an argument in favour of the quantum mind  hypothesis
\cite{RPSH}. The argument against this proposition according to which quantum systems in the brain decohere quickly and cannot control brain function (see \cite{MT}) could be less relevant as decoherence plays no role in the quantum computation discussed here.\bigskip

{\bf Acknowledgement.}
We thank Professor K.~Svozil for useful comments.

\vspace*{-2.5mm}



\begin{thebibliography}{99}

\vspace*{-1.5mm}

\bibitem{Manin} M. Aigner,  V. A. Schmidt. Good proofs are proofs that make us wiser: interview with Yu. I . Manin, {\em The Berlin Intelligencer} 1998, 16--19, \url{http://www.ega-math.narod.ru/Math/Manin.htm}.
\bibitem{Berndt} B. C. Berndt. {\em Ramanujan's Notebooks}, Part V.,  Springer, Heidelberg,   2005.
\bibitem{BH} V. Buzek, M. Hillery. Quantum cloning, {\em Physics World} 14, 11 (2001),  25--29.
\bibitem{CM} C. S. Calude, E.  Calude, S. Marcus. Proving and programming, in C. S. Calude (ed.).  {\em Randomness \& Complexity, from Leibniz to Chaitin},
World Scientific, Singapore, 2007, 310--321.
\bibitem{Dirac58} P. A. M. Dirac. \emph{The Principles of Quantum Mechanics}, Oxford University Press, London, 1958. (4th ed.)
\bibitem{HY95} F. Hiai and K. Yanagi. \emph{Hilbert Spaces and Linear Operators}, Makino-Shoten, 1995. (in Japanese)
\bibitem{JM84} J. P. Jones and Y. V. Matijasevi\v{c}.
Register machine proof of the theorem on exponential diophantine representation of enumerable sets, \emph{J. Symbolic Logic},  49, 3 (1984), 818--829.
\bibitem{Mahan10} G. D. Mahan. \emph{Many-Particle Physics}, Kluwer Academic/Plenum Publishers, New York, 2010. (3rd ed.)
\bibitem{Matijasevic93} Y. V. Matijasevi\v{c}. \emph{Hilbert's Tenth Problem}, The MIT Press, Cambridge, 1993. 
\bibitem{RPSH} R. Penrose, S. Hameroff.  Consciousness in the universe: Neuroscience, quantum space-time geometry and Orch OR theory, {\em Journal of Cosmology} 14, 2011, \url{http://journalofcosmology.com/Consciousness160.html}.
\bibitem{MT} M. Tegmark.  Importance of quantum decoherence in brain processes, {\em Physical Review E} 61, 4 (2000),  4194--4206.

\end{thebibliography}
\end{document}